\documentclass[12pt]{article}
\usepackage[english,portuguese]{babel}
\usepackage[utf8]{inputenc}
\usepackage{geometry}
\usepackage{yfonts}
\usepackage{xcolor}
\usepackage{xfrac}
\usepackage[c]{esvect}
\usepackage{setspace}
\usepackage{graphicx}
\usepackage{indentfirst} %indentar parágrafos
\usepackage{amsmath,amsfonts,amsthm,amscd,upref,amstext}
\usepackage{romannum}
\usepackage{mathtools}
\usepackage{comment}
\usepackage{hyperref}
\usepackage{biblatex}
\usepackage{accents}
\usepackage{authblk}
\usepackage[normalem]{ulem}
\addbibresource{reference.bib}
\usepackage[titletoc]{appendix}

\geometry{textwidth=6in, textheight=9in, marginparsep=7pt, marginparwidth=.6in, top=30mm, bottom=25mm}

\newtheorem{theorem}{Theorem}[section]

\newtheorem{lemma}[theorem]{Lemma}

\newtheorem{rmk}[theorem]{Remark}

\newcommand{\R}{\mathbb{R}}

\newcommand{\del}{\Tilde{\nabla}}
\newcommand{\ee}{\Tilde{e}}
\newcommand{\oo}{\Tilde{M}}
\newcommand{\DD}{\mathcal{D}}

\newcommand{\CC}{\mathcal{C}}
\newcommand{\HH}{\mathcal{H}}
\newcommand{\FF}{\mathcal{F}}
\newcommand{\Surf}{\mathcal{S}}
\newcommand{\EE}{\mathcal{E}}

\title{Some Remarks on  Wang-Yau Quasi-Local Mass}
\author[1]{Bowen Zhao}
\author[1]{Lars Andersson}
\author[2]{Shing-Tung Yau}
\affil[1]{Beijing Institute of Mathematical Sciences and Applications, Beijing 101408,
China}
\affil[2]{Yau Mathematical Sciences Center, Tsinghua
University, Beijing 100084, China}
\date{}                     %% if you don't need date to appear
\setcounter{Maxaffil}{0}

\usepackage{authblk}
\begin{document}
\maketitle
\doublespacing

\selectlanguage{english}
\abstract{We review Wang-Yau quasi-local definitions along the line of gravitational Hamiltonian. This makes clear the connection and difference between Wang-Yau definition and Brown-York or even global ADM definition. We make a brief comment on admissibility condition in Wang-Yau quasi-lcoal mass. We extend the positivity proof for Wang-Yau quasi-local energy to allow possible presence of strictly stable apparent horizons through establishing solvability of Dirac equation in certain $3$-manifolds that possess cylindrical ends, as in the case of Jang's graph blowing up at marginally outer trapped surfaces.}

\begin{section}{Introduction}\label{section:intro}
Due to the covariant nature of Einstein equation or the equivalence principle, there cannot exist pointwise or local notions of mass or other classically conserved quantities in general relativity. On the other hand, there are well-defined global notions of energy and momentum for the total spacetime, e.g. ADM mass and energy-momentum four vector defined at spacelike infinity and Bondi mass defined at null infinity. The goal of pursing a quasi-local definition is to find
appropriate notions of energy-momentum and angular momentum for finite, extended regions of spacetime \cite{Penrose1982unsolved,Szabados:2009review}. There have been several definitions of quasi-local mass, energy-momentum and angular momentum in the literature \cite{Szabados:2009review}. A notable one is given by Brown \& York based on a variational analysis of the action of General Relativity \cite{brown1993quasilocal}. Their proposed quasi-local energy is closely related to gravitational Hamiltonian and is in the form of a flux integral over a two-surface $\Sigma$.  However, the Brown--York definition depends explicitly on the choice of a spacelike 3-manifold $\Omega$ that is bounded by the two-surface $\Sigma$ under consideration. Also, there exist surfaces in Minkowski spacetime with strictly positive Brown--York mass \cite{murchadha2004comment}. A related definition proposed by Wang \&
Yau \cite{wangyau2009cmp,wang2009quasilocalPRL} resolved these undesirable features by further including momentum information (second fundamental form in the time direction) in their definition. 
%Some notable ones are due to Bartnik \cite{Bartnik:1989zz}, Hawking \cite{Hawking:1968qt}, and Penrose\cite{Penrose:1982wp}; see \cite{Penrose1982unsolved,Szabados:2009review} for a review. 
%In this article we remark on some aspects of positivity of Wang-Yau quasi-local mass.

In the next section we review the Wang-Yau definitions of quasi-local mass, energy-momentum and angular momentum along the line of gravitational action and Hamiltonian to emphasizes the physical intuition. For a more geometry-oriented perspective, we refer the reader to their original works \cite{wang2009isometric,chenwangyau2015conserved}. In section \ref{section:admissibility}, we make some remarks on the admissibility conditions appear in Wang-Yau definitions. In section \ref{section:cylindrical_ends}, we extend Wang-Yau positivity proof for their quasi-local mass to allow possible apparent horizons/black holes inside the region enclosed by the $2$-surface $\Sigma$. Section \ref{section:cylindrical_ends} is the original part of this work.
\end{section}

\begin{section}{Quasi-local quasi-conserved quantities and gravitational Hamiltonian}\label{section:review}
    
\subsection{Review of the ADM definition}
Recall that the gravitation Hamiltonian (the one recovers Einstein equation as the equation of motion when fix the boundary metric), after imposing Hamiltonian and momentum constraints, is a pure boundary term
\begin{equation}\label{eq:hamiltonian}
    H = \frac{1}{8\pi} \int_{S_{t,r}} N k_0 -\frac{1}{8\pi} \int_{S_{t,r}} N k-(K_{ij}-(trK) g_{ij})\mathbf{N}^i \nu^j 
\end{equation}
where $g_{ij}$, $K_{ij}$ are, respectively, the induced metric and second fundamental form of the spacelike $t$-level surface in the $3+1$ decomposition, $\nu$ and $k$ are, respectively, the outward normal and the mean curvature of the $2$-surface $S_{t,r}$ in the $t$-level surface and $k_0$ is the mean curvature of isometric embedding of $S_{t,r}$ into $\R^3\subset \R^{3,1}$. It is assumed in \eqref{eq:hamiltonian} that the reference spacetime is the totally geodesic $\R^3\subset \R^{3,1}$ whose second fundamental form $K_0^{ij}$ vanishes (all variables with subscript $0$ are from $\R^{3,1}$).

Choosing $N=1, \mathbf{N}=0$ and letting $S_{t,r}\to S_{t,\infty}$, one recovers ADM energy, i.e.
$$m_{ADM} = \frac{1}{8\pi} \int_{S_{t,\infty}} k_0-k = \frac{1}{16\pi}\int_{S_{t,\infty}} (\partial_jg_{ij}-\partial_i g_{jj})\nu^i$$
Choosing $N=0,\mathbf{N}=\partial_i$ and letting $S_{t,r}\to S_{t,\infty}$, one recovers ADM momentum, i.e.
$$P_k^{ADM} = \frac{1}{8\pi} \int_{S_{t,\infty}} (K_{ij}-(trK) g_{ij})(\partial_k)^i \nu^j  $$

However, choosing $N=0, \mathbf{N}=\partial_\varphi$ and letting $S_{t,r}\to S_{t,\infty}$ in \eqref{eq:hamiltonian} dose not give a well-defined ADM angular momentum unless additional fall-off conditions like Regge-Teitelboim are imposed. %And there was no definition of angular momentum free of super-translation ambiguity before Chen-Wang-Yau definition\cite{chenwangyau2015conserved}.

\begin{subsection}{(Chen)-Wang-Yau definition}
Morally, the definition of Wang-Yau quasi-local quantities, e.g. energy, linear momentum and anguarl momentum, are very much of the similar spirit as above. That is to say one takes according lapse ($N$) and shift $(\mathbf{N})$ for different killing vectors in the Hamiltonian formula. We provide two formulations below based on the origional work of (Chen-) Wang \& Yau \cite{wangyau2009cmp,wang2009quasilocalPRL,chenwangyau2015conserved}.

\begin{subsubsection}{Canonical gauge formulation}
First recall that Brown-York quasi-local energy (QLE) for a finite $2$-surface $\Sigma$ is recovered by setting $$N=1,\mathbf{N}=0$$ in equation (\ref{eq:hamiltonian}), i.e. one takes the ``Eulerian'' observer, without taking the $r \to \infty$ limit as in ADM energy. 
    However, this choice is much less natural at finite 2-surface $\Sigma$ than at asymptotic infinity of an asymptotically flat spacetime $M$. If the $2$-surface exhibits rotational symmetry, they also defined a quasi-local angular momentum by taking 
    $N=0, \mathbf{N}=\partial_\phi$.
    
Indeed, Wang-Yau QLE is exactly recovered with the choice of 
    \begin{equation}\label{eq:N}
        N=\sqrt{1+|\nabla \tau|^2}, \quad \mathbf{N}=-\nabla \tau
    \end{equation} 
    where $\tau=-\langle X, T_0\rangle$ is the time function defined shortly below. This choice is referred to as the ``canonical gauge'' by Wang \& Yau.%(here gauge just refers to choice of $N$ and $\mathbf{N}$). 
    We now explain why this choice of gauge is more canonical than the choice of Eulerian observer. Let's consider the reference spacetime $\R^{3,1}$ first. Let $X: \Sigma\to \R^{3,1}$ be an isometric embedding and denote the image still by $\Sigma$. For a given observer $T_0$ (a future-pointing, timelike, constant unit vector in $\R^{3,1}$), one can decompose $T_0$ uniquely along the cylinder $\Sigma \times T_0$
    $$T_0 = N u_0 + \mathbf{N}$$
    where $u_0 \in N\Sigma$ is the timelike unit normal of $\Sigma$ in the cylinder $\Sigma \times T_0$. Denote the spacelike unit normal to the cylinder $\Sigma \times T_0$ by $v_0$, then $\{u_0, v_0 \}$ spans $N\Sigma$. An exercise then shows that, expressed in terms of the time function $\tau=-\langle X, T_0\rangle$,  $N $ and $\mathbf{N}$  is just equation (\ref{eq:N}):
    \begin{align*}
        \mathbf{N}_a &= \langle T_0, \frac{\partial X}{\partial q^a}\rangle = -\nabla_a \tau\\
        -1=\langle T_0, T_0\rangle &= -N^2+\langle \mathbf{N},\mathbf{N}\rangle = -N^2 + |\nabla \tau|^2
    \end{align*}
    where $q^a$ are local coordinates on $\Sigma \subset \R^{3,1}$ and $\nabla$ in this section denotes intrinsic covariant derivative on $\Sigma$.
    One then ``pulls back" the observer $T_0 \in \R^{3,1}$ to the physical spacetime $T \in TM$ through imposing the canonical gauge, i.e.
    \begin{equation}\label{eq:canonical_gauge}
        \langle T_0, H_0 \rangle = \langle T, H\rangle
    \end{equation}
    where $H_0$ and $H$ are the mean curvature vector of $\Sigma$ in $\R^{3,1}$ and $M$, respectively. So in physical terms, we pick the observer $T$ in $M$ whose measured expansion of $\Sigma$ equals measurements by his/her counterpart in the reference spacetime $\R^{3,1}$. 
    
    Note that one can do a similar decomposition of $T$ in the physical spacetime $M$ by demanding that the timelike unit normal $u\in \text{Span}\{T\Sigma, T\}$ while the spacelike unit normal $v$ orthorgonal to $T$, i.e.  
    $$T = N u + \mathbf{N}$$
    Here we choose $N$ and $\mathbf{N}$ to be the same as in $R^{3,1}$. Then \eqref{eq:canonical_gauge} is equivalent to demanding
    \begin{equation}\label{eq:canonical_gauge_u}
        \langle u_0, H_0 \rangle = \langle u, H\rangle
    \end{equation}
    which yields a unique choice of timelike unit normal $u$ of $\Sigma$, and hence a unique $t$-foliation of M around $\Sigma$. 
    This contrasts with Brown-York definition which starts from an arbitrary $t$-foliation, i.e. a $3+1$-decomposition of the spacetime $M$, which then picks the unit timelike normal $u$ of $\Sigma$. This also emphasize that Wang-Yau definition is inherent to the $2$-surface under consideration whose intrinsic metric and extrinsic curvature both determines the definition.

    We would like to note here that the mean curvature vector $H$ is assumed to be spacelike throughout while $T$ timelike. 
    %in the \cite{wang2009isometric}. 
    %This might be more than a technical condition: numerical evidence tells that $H_0$ could remain space-like when $H$ just turns time-like; assuming $T$ and $T_0$ both timelike, the RHS of \eqref{eq:canonical_gauge} is sign definite while the LHS is not and thus \eqref{eq:canonical_gauge} cannot be satisfied. 
    To extend the definition to allow timelike mean curvature vector is a work in progress. %It seems that in attempt to allowing time-like mean curvature vector $H$, one might need to propose a different gauge to replace \eqref{eq:canonical_gauge}.
    The above construction can be done for any given isometric embedding $X$ and any given observer $T_0$. The resulting integral difference \eqref{eq:hamiltonian}, with the canoncial gauge \eqref{eq:canonical_gauge} imposed, is defined as the Wang-Yau quasi-local \textit{energy}. Just as in special relativity, rest mass is the minimum among all energies observed by all kinds of observers, the Wang-Yau quasi-local \textit{mass} is obtained by minimizing over all quasi-local energies with different choices of $T_0$ and $X$. Actually, the time function $\tau=-\langle X, T_0\rangle$ alone characterizes all the freedom (equivalent pairs of ($X$, $T_0$) are related by isometries of $\R^{3,1}$), so just minimizing over $\tau$ is enough. In fact, one  needs to minimize over an ``admissable'' set of $\tau$ function, not all $\tau$ function. This admissibility condition for $\tau$ is largely a technical condition, only to guarantee the positivity of Wang-Yau QLM according to the current proof (see remarks in section \ref{section:admissibility}). The critical point $\tau_0$ or the corresponding pair $(X,T_0)$ when the minimum is attained is referred to as the ``optimal embedding''. This optimiation/minimization procedure yields a divergence free current $j$, which can be interpreted as a momentum surface density (see below and Appendix \ref{section:appA}). % This ends the Wang-Yau QLE definition!
    
    To define other quasi-local quantities, inspired by ADM definitions, one would like to replace the timelike observor $T_0$ by other Killing vectors in $\R^{3,1}$ and use their corresponding lapse ($N$) and shift ($\mathbf{N}$) in the Hamiltonian (\ref{eq:hamiltonian}). This exactly recovers all other quasi-local, quasi-conserved quantities defined in \cite{chenwangyau2015conserved}. A more explicit account of these other quasi-lcoal quantitiess is given in the next subsection after introducing another formulation of Wang-Yau QLM.
    \end{subsubsection}
\begin{subsubsection}{Energy \& Momentum surface density formulation}
    Wang \& Yau \cite{wang2009quasilocalPRL} noticed that the integrand in equation (\ref{eq:hamiltonian}) can be written as a inner product $$N k_0-(K^0_{ij}-(trK_0) g_{ij})\mathbf{N}^i (v_0)^j  = -\langle \Xi_0, T_0\rangle$$
    where $T_0=N u_0 + \mathbf{N}$ is the observer in $\R^{3,1}$ as above and the vector
    \begin{equation}
        \Xi_0^{\alpha}=k_0 \, u_0^\alpha + \large( ({K_0})^\alpha_{\beta}-(trK_0) \delta^\alpha_\beta \large)\, v_0^\beta
    \end{equation}
    is referred to as ``the generalized mean curvature vector''. In the above formula, $u_0,v_0$ are still timelike and spacelike unit normal of $\Sigma$ in $\R^{3,1}$, with $\langle v_0, T_0\rangle =0$. Since all information is about $\Sigma$, one can take a (arbitrary) spacelike 3-surface $\Omega_0$ such that $\partial\Omega_0=\Sigma$ and $u_0$ is its timelike normal at $\Sigma$. Then $({K_0})^{i}_{j}$ is the second fundamental form of $\Omega_0$ and $k_0$ is the (scalar) mean curvature of $\Sigma$ in $\Omega_0$. 
    Replace $\{u_0, v_0, ({K_0})^{i}_{j}, k_0\}$ in $\R^{3,1}$ by their cannonical counterparts $\{u, v, K^i_j, k\}$ in $M$, uniquely determined by the canonical gauge \eqref{eq:canonical_gauge_u}, one has a similar vector fild in $M$
    $$        \Xi^{\alpha}=k \, u^\alpha + \large( {K}^\alpha_{\beta}-(trK) \delta^\alpha_\beta \large)\, v^\beta$$
    where $K_{ij}$ is the second fundamental form of $\Omega$ satisfying $\partial\Omega=\Sigma$ and $u$ is its timelike normal at $\Sigma$.
    Then Wang-Yau QLE is simply
    \begin{equation}
        8\pi\, QLE = \int_\Sigma -\langle \Xi_0, T_0 \rangle - \int_\Sigma -\langle \Xi, T \rangle
    \end{equation}

    To define a quasi-conserved, quasi-local charge for $M$ associated with a killing vector $\mathbf{\zeta}_0\in \R^{3,1}$, one simply replace $T_0$ by $\mathbf{\zeta}_0$ in the above formula, i.e.
        \begin{equation}\label{eq:QL_generalized_mean_curvature}
        8\pi\, QLC(\mathbf{\zeta}) = \int_\Sigma -\langle \Xi_0, \mathbf{\zeta}_0 \rangle - \int_\Sigma -\langle \Xi, \mathbf{\zeta} \rangle
    \end{equation}
    where $\zeta \in M$ is obtained by `pulling back'' $\zeta_0\in \R^{3,1}$, maintaining the same lapse and shift just as in ``pulling back'' $T_0$.
    Postponing the details to Appendix \ref{section:appA}, this replacement indeed recovers the formula given in \cite{chenwangyau2015conserved}
    \begin{equation}\label{eq:QL_CWY2015}
        8\pi\, QLC(\mathbf{\zeta}) = -\int_\Sigma \rho \, \langle \mathbf{\zeta},T_0\rangle + \langle \mathbf{\zeta}, j\rangle
    \end{equation}
 where the energy surface density $\rho$ and the divergence-free current $j$ are defined in Appendix \ref{section:appA}. Note that \eqref{eq:QL_CWY2015} reminds of the charge definition by Brown \& York when there exists a killing vector in their  timelike $3$-boundary $^3B$
 $$Q^{BY}_\zeta = \int_\Sigma \epsilon \langle \zeta, u \rangle + \langle \zeta, j_{BY} \rangle$$
 where $\epsilon=u_i u_j \tau^{ij}$, $j^{BY}_a=-\sigma_{ai} u_j \tau^{ij}$ are their energy and momentum surface density. This indicates that one may interpret $\rho$ and $j$ in \eqref{eq:QL_CWY2015} as energy and momentum surface density for Wang-Yau definition, respectively.
 
Lastly, we emphasize that in the definition of Wang-Yau QLM, only information about the $2$-surface $\Sigma$ is needed while a timelike normal is determined by the canonical gauge \eqref{eq:canonical_gauge_u}. One can vary the $3$-manifold $\Omega$ bounded by $\Sigma$ without changing QLM, provided one does not change the timelike normal $u$ near $\Sigma$.
%    \textbf{However, I think there is a caveat. One can only go from equation (\ref{eq:QL_generalized_mean_curvature}) to (\ref{eq:QL_CWY2015}) by assuming the killing vector can decompose similarly as $T_0$.} That is, assumes $L=0$ in 
%    $$\mathbf{\zeta} = \Tilde{N} u_0 + \Tilde{\mathbf{N}} + L v $$
%    where $\Tilde{\mathbf{N}} \in T\Sigma$ still.
%    If $L\neq 0$, one should have extra terms in equation (\ref{eq:QL_CWY2015})
%        \begin{equation}\label{eq:QL_CWY2015_modify}
%        8\pi\, E(T_0,\mathbf{\zeta}) = -\int_\Sigma \rho \langle \mathbf{\zeta},T_0\rangle + \langle \mathbf{\zeta}, j\rangle + \color{red}{\int_\Sigma \big(\langle H,u\rangle-\langle H_0, u_0\rangle\big) L}%\int_\Sigma (trK_0-trK)L - [K(v_0,v_0)-K(v,v)]L
%    \end{equation}
%    In particular, for linear momentum, $\mathbf{\zeta}=\partial_i$ should have a component in $v$ at finite surface $\Sigma$ (may be no $v$-component at $S_{t,\infty}$ since equation (\ref{eq:QL_CWY2015}) seems to recover ADM momentum at $\infty$?). In Chen-Wang-Yau definition (\ref{eq:QL_CWY2015}), linear momentum identically vanishes when one takes the optimal embedding (as $div(j)\equiv 0$ and $\langle T_0=\partial_0, \partial_i\rangle=0$). This sounds odd to me. So one really needs  includes $L v$ component of $\mathbf{\zeta}$? But the vanishing of linear momentum seems to be key for Chen-Wang-Yau quasilocal angular momentum to be super-translation free? 

\end{subsubsection}
    
\end{subsection}

\end{section}

\begin{section}{Remarks on admissibility conditions in Wang-Yau QLM}\label{section:admissibility}
We have mentioned above that there exists technical admissibility conditions \cite[Definition 5.1]{wang2009isometric} on the time function $\tau=\langle X, T_0\rangle$. The current positivity proof of Wang-Yau quasi-local \textit{energy} only works under these conditions. Therefore to define a \textit{positive} quasi-local \textit{mass}, the minimization procedure to solve for the ``optimal embedding'' only includes time functions that satisfy these admissibility conditions. Thus these admissibility conditions deserve some discussions.

Recall that the first admissibility condition ($K_G$ is Gauss curvature of $\Sigma$) 
\begin{equation}\label{eq:Gauss_proj}
     K_G + \frac{\text{det}(\nabla^2 \tau)}{1+|\nabla \tau|^2} >0
\end{equation}
is to guarantee the existence of any isometric embedding $X:\Sigma \to \R^{3,1}$ through a clever trick of invoking the projection $p:\R^{3,1}\to \R^3$ along $T_0$. One can apply Weyl's theorem \cite{nirenberg1953weyl,pogorelov1952regularity} to the isometric embedding into $\R^3$ of the projected surface  with modified metric $\large( p\circ X(\Sigma), d\hat{\sigma}^2 = d{\sigma}^2 + d\tau^2 \large)$. This guarantees the existence of an isometric embedding of $(\Sigma,d\sigma^2)\hookrightarrow \R^{3,1}$ with time function $\tau$, if the Gauss curvature of the projected surface $p\circ X(\Sigma)$ is positive, i.e. \eqref{eq:Gauss_proj}.
%This condition is as sharp as Weyl's theorem.

The second admissibility condition is to guarantee solvability of Jang's equation and actually can be removed. It is known that Jang's equation over a $3$-manifold $\Omega$ is solvable if the mean curvature vector of the boundary surface $\Sigma=\partial\Omega$ is spacelike, as noted in later works \cite[Theorem 4]{ChenWangYau:2014cmp}.
An easy way to see this is to note
\begin{align*}
    tr_{\Tilde{\Omega}} K = tr_{\partial\Omega}K + K(e_3,e_3)(1-\frac{(f_3)^2}{1+|Df|^2})-\frac{K(\nabla f,\nabla f)+2 f_3\,\,K(\nabla f, e_3)}{1+|Df|^2}
%&= tr_{\Omega\times\R} P - P(\ee_4,\ee_4) \\
%&=\sum_{i=1}^{3} P(e_i,e_i) - \frac{P(Df,Df)}{1+|Df|^2}\\
\end{align*}
%In the first line, we used the fact that $\{\Tilde{e}_1, \Tilde{e}_2, \Tilde{e}_3, \Tilde{e}_4\}$ forms an ON basis for $\Omega\times\R$. In the second line, we switched to the other ON basis $\{e_1,e_2,e_3, v\}$ for the first term and used $P(v,\cdot)=0$ and $\ee_4 = \frac{Df+v}{\sqrt{1+|Df|^2}}$ for the second term. In the third line, we used that $Df = \nabla f + f_3\, e_3$ where $\nabla$ is connection along $\Sigma_d$.
The second and third terms on the right vanishes as one takes the barrier function to be steeper and steeper so only the first term about $\partial \Omega$ matters. In the end one concludes that the mean curvature vector of $\Sigma=\partial \Omega$ being spacelike is enough to guarantee solvability of Jang's equation.
Since mean curvature vector $H$ being spacelike is the basic assumption in the current setup to define (Chen-)Wang-Yau quasi-local quantities, requiring solvability of Jang's equation imposes no additional constraints.

The third admissibility condition
\begin{align}\label{eq:3_admissbility}
    \Tilde{k} - \langle Y, \Tilde{\nu} \rangle >0
\end{align}
is required the current positivity proof of Wang-Yau QLE/QLM based on Jang's solution (see section \ref{section:cylindrical_ends} for details). Here $\Tilde{\nu}$ and $\Tilde{k}$ are, respectively, the outward unit normal and the mean curvature of $\Tilde{\Sigma}$ (graph of $\tau$ over $\Sigma=\partial \Omega$) along Jang's graph $\Tilde{\Omega}\subset \Omega\times \R$. Jang's solution enters as an intermediate term in the positivity inequality 
$$\int_\Sigma -\langle \Xi_0, T_0 \rangle \geq  \int_{\Tilde{\Sigma}} \Tilde{k}-\langle Y, \Tilde{\nu} \rangle \geq \int_\Sigma -\langle \Xi, T \rangle $$
%where $\Tilde{\nu}$ and $\Tilde{k}$ are outward unit normal and mean curvature of $\Tilde{\Sigma}$ (graph of $\tau$ in $\Omega\times \R$) along Jang's graph $\Tilde{\Omega}\subset \Omega\times \R$. 
We emphasize that no information about Jang's solution actually enters the Wang-Yau QLM definition. Furthermore, the spacelike $3$-manifold $\Omega$ bounded by $\Sigma$ is rather arbitrary as long as one keeps its timelike unit normal at $\Sigma$ unchanged, i.e. the one picked by the canonical gauge. This raises the question if it is possible to remove this technical condition by modifying the current positivity proof. %We learned that \cite{MondalYau2024spinorial} tried to work out a positivity proof with spinor argument. However, their proof still relies critically on the assumption \eqref{eq:3_admissbility}.

\end{section}

\begin{section}{Positivity of Wang-Yau QLM in the presence of black holes/apparent horizons}\label{section:cylindrical_ends}

In this section we extend the positivity proof of Wang-Yau quasi-local mass to allow possible appearance of apparent horizons in the region enclosed by the 2-surface $\Sigma$. This would reinforce the expectation that the Wang-Yau definition is inherent to the $2$-surface $\Sigma$ but not to the choice of a $3$-manifold that $\Sigma$ bounds.
Following the original argument of Wang \& Yau \cite{wang2009isometric}, we only need to show the solvability of Dirac equation in the presence of additional cylindrical ends, where Jang's graph $\Tilde{M}\subset M\times \R$ blows up or blows down near marginally outer trapped surfaces (MOTS) $\Surf \subset M$ or marginally inner trapped surfaces (MITS) $\Surf \subset M$, respectively. In this work, we restrict to considering strictly stable MOTS/MITS and leave the marginally stable case to a separate paper.

The standard procedure to prove solvability of Dirac type equation on compact or non-compact manifold is
(i) show that the Dirac operator is Fredholm, i.e. it has closed range and finite dimensional kernel and cokernel.
(ii) Show that the kernel and cokernel of the Dirac operator is in fact zero. %work in appropriate function spaces such
For a Dirac type operator $D$ on a compact manifold $\Tilde{M}$, basic elliptic estimates such as
$||\psi||_{H^1,\Tilde{M}} \leq C\large( || D\psi ||_{2,\Tilde{M}} + ||\psi||_{2,\Tilde{M}} \large)$, 
establishes Fredholm property or solvability following standard arguments \cite[section 7]{bartnik2005boundary}. In particular, the elliptic estimate combined with Rellich's compactness theorem directly yields that the unit ball of Ker$\,D$ is compact and hence Ker$\,D$ is finite dimensional. That the range is closed follows from a similar argument. The cokernel is typically examined either through studying the kernel of the adjoint operator or through constructing a ``parametrix'', i.e. an inverse modulo compact operators.
In the case of a complete, non-compact manifold $M$, one may choose suitable weighted function spaces to have the Fredholm property.
%the range of the operator, considered as acting on $L_p$ spaces, is often not closed, as $L_p$ spaces do not encompass natural decay rates. It is thus necessary to work in suitable weighted function spaces to restore Fredholm property.
%$D:\HH_1\to \HH_2$
Moreover, Rellich's compactness theorem fails to apply to non-comapct manifold so one needs to work out an improved elliptic estimate such as
$||\psi||_{\HH_1,\Tilde{M}} \leq C\large( || D\psi ||_{\HH_2,\Tilde{M}} + ||\psi||_{2,B_R} \large)$, 
where the improved ``error'' term (second term on the right) allows one to apply Rellich's compactness theorem \cite[section 1]{bartnik1986mass}. 

For Euclidean ends, following the classical work \cite{parker1982witten} we use the weighed Sobolev space $W^{2,1}_{-1}$, which is the completion of smooth, compactly supported sections $C^\infty_0(\Tilde{M};S)$ of the spinor bundle $S$ under the following norm
$$||\psi||_{W^{2,1}_{-1}}^2 = ||\nabla \psi||_2^2 + ||\sigma^{-1}\,\psi||_2^2$$
where $||\cdot||_2$ denotes $L^2$-norm and $\sigma$ is chosen to be a smooth function satisfying that $\sigma\geq 1$ and $\sigma\to r$ at Euclidean ends $E_i$. This choice guarantees that the Dirac operator acts as an isomorphism on Euclidean ends.

%Dirac operator on non-compact manifolds with Euclidean ends , \cite{choquet1981elliptic} We refer the reader to classical works \cite{lockhart1985elliptic,lockhart1987fredholm,choquet1981elliptic} for more details.This is how Euclidean manifolds. For cylindrical manifolds, the special geometry aids.
%In the case of a non-compact manifold it is necessary to work in suitable weighted function spaces \cite{bartnik1986mass}.

%We recall from \cite{parker1982witten} that for three dimensional manifolds with Euclidean ends only, i.e. asymptotically flat manfilds, the Dirac operator is invertible in the following function spaces:
%\begin{align}
%    D: L^{2,1} \longrightarrow L^2
%\end{align}
%where $L^{2,1}$ is the completion of smooth, compactly supported sections of spinor bundle $C_0^\infty(M;S)$ under the norm of 
%\begin{equation}
%    ||\psi||_{L^{2,1}}=||\nabla \psi||_2 + ||\sigma^{-1} \,\psi||_2
%\end{equation}
%where $||\cdot||$ denotes $L^2$ norm and $\sigma$ is a smooth function satisfying (i)$\sigma=r$ outside some compact set $K$ (ii) $\sigma=1$ in $K$ and (iii)$\sigma\geq 1$. For asymptotically Euclidean end, this is the same as $\dot{H}^1=\overline{C_0^\infty}$ with resepct to the following norm
%$$||\psi||_{\dot{H}^1} = ||\nabla \psi||_2$$

 We deal with cylindrical ends with a seperation of variables, following \cite{donaldson2002floer}. Let $t$ be the coordinate along the $\R$ direction in $M\times \R$ and $\Surf_t$ be the cross section cut by constant $t$. In this section, we use $\nabla$ to denote covariant derivative on $\Tilde{M}$ and $\nabla^{\Surf_t}$ to denote intrinsic covariant derivative on $\Surf_t$. The Dirac operator $D$ on a cylindrical end $\CC$ could be written as
\begin{equation*}
    D = c(\nu)\cdot \large( \nabla_{\nu} - D^{\Surf_t} - \frac{1}{2}\,k^{\nu}\large)
\end{equation*}
where $c(\cdot)$ denotes Clifford multiplication, $\nu$ denotes the unit normal of ${\Surf_t}$ toward the infinity, $D^{\Surf_t}=c(\nu)c(e_A)\nabla_A^{\Surf_t}$ denotes the self-adjoint boundary Dirac operator on ${\Surf_t}$ with $\{e_A\}$ ON basis on ${\Surf_t}$
%(Schoen \& Yau\cite{SchoenYau1981} showed that these are topological 2-spheres for Jang's graph $\Tilde{\Omega}$ over $3$-manifold $\Omega$) For strict DEC
and $k^{\nu}$ denotes the mean curvature of ${\Surf_t}$ with respect to $\nu$. Since $c(\nu)^2=-|\nu|^2\, \text{Id}$, studying kernel and cokernel of $D$ reduces to studying that of $$\DD=\nabla_{\nu} - D^{\Surf_t} - \frac{1}{2}\,k^{\nu} = \hat{D} + \EE,  \quad \hat{D} =d_{\partial_t} + L$$
where $\DD$ asymptotes to $\hat{D}$ with $L$ being $t$-independent and formally self-adjoint.
%\textcolor{red}{Details on the error, remove later: let $\phi_I$ be basis of the spin frame, for $\psi=\psi_I\phi^I$, $ \nabla_\nu (\psi_I \phi^I) = \nu(\psi_I)\phi^I - \frac{1}{4}\omega_{ij}(\nu)\,c(e_i e_j)\psi$, the first term is $d_{\partial_t}$ while the second term goes to the error with $\omega_{ij}(\nu)\to 0$. Also, $D^{\Surf_t}$ although self-adjoint but is $t$-dependent. The $t$-dependence part is 2nd contribution to the error.}
%$\hat{L}=- D^\Sigma - \frac{1}{2}\,k^{\partial_t}$ 
%$$L=- D^\Sigma - \frac{1}{2}\,k^{\partial t}$ 
As one approaches the corresponding MOTS/MITS $\Surf$, Jang's graph converge uniformly in $C^2$ to $\Surf\times \R$ \cite{SchoenYau1981}. Furthermore if $\Sigma$ is strictly stable, i.e. the principal eigenvalue of stability operator $\alpha>0$, the convergence is exponential at the rate of $e^{-\sqrt{\alpha}t}$ \cite{metzger2010blowup,yu2019blowup}. Thus the error $\EE$ vanishes toward cylindrical infinities and vanishes exponentially if the corresponding MOTS/MITS is strictly stable. We restrict to the generic, strictly stable case here and leave the marginally stable case for future works.

Since $L$ is an elliptic and formally self-adjoint operator on a compact manifold without boundary, standard theory \cite[Theorem 4.1]{bartnik2005boundary} states that eigenfunctions $\{\phi_\lambda\}$ of $L$ consists of a countable, complete orthonormal basis of $L^2(\Surf)$ and all eigenvalues $\lambda$ are real, with no accumulation point in $\R$. Then separation of variable 
\begin{align*}
    \hat{D}\psi =f, \quad  \psi = \sum_\lambda \psi_\lambda(t)\phi_\lambda,& \quad  f=\sum_\lambda f_\lambda(t) \phi_\lambda, \quad \psi_\lambda, f_\lambda \in L^2(\R_+) \\
    \Rightarrow & (\partial_t + \lambda ) \psi_\lambda(t)=f_\lambda(t) 
\end{align*}
implies that if $\lambda\neq 0$, $\psi_\lambda(t)$ experiences exponential decay away from support of $f_\lambda$. 
One can write explicit formula of $\psi_\lambda$ in terms of $f_\lambda$ and hence construct a pseudo-inverse operator to conclude that $\hat{D}: L^{2,1}\to L^2$ is Fredholm,  provided that $0$ does not lie in the spectrum of $L$ \cite[Proposition 3.6]{donaldson2002floer}. For the general case that $0$ lies in the spectrum, one simply shifts the spectrum by $\delta$, with $\delta$ not lying in the spectrum \cite[section 3.3.1]{donaldson2002floer}. This shift is implemented through employing exponentially weighted function space $L^{2}_\delta$ and $L^{2,1}_\delta$, which are completions of $C_0^\infty(\Tilde{M};S)$ under the norm
\begin{equation}
    ||\psi||_{L^2_\delta} = ||e^{\delta t}\psi(s)||_2, \quad ||\psi||_{L^{2,1}_{{\delta}} }= ||e^{\delta t}\nabla \psi(s)||_2+ ||e^{\delta t}\psi(s)||_2,
\end{equation}
Then $\hat{D}$ acting on the weighted function space $$\hat{D}: L^{2,1}_\delta \longrightarrow L^2_\delta$$ amounts to conjugation in unweighted function space
$$ e^{\delta t} \circ \hat{D} \circ e^{-\delta t}: L^{2,1} \longrightarrow L^2$$
which shifts $L\to L-\delta$. Here we choose $0<\delta<\min_{\lambda\neq 0} |\lambda|$. The previous argument for $\hat{D}$ being Fredholm when $L$ is invertible goes through without changes for $\hat{D}$ being Fredholm in the weighted function space when $L$ not invertible.
This motivates the choice of exponential weights for cylindrical ends.

We will collectively denote our domain function space as $\HH_1$, which is the completion of $C_0^\infty(\Tilde{M};S)$ under the following weighted norm
$$||\psi||_{\HH_1}^2 = ||\Tilde{w}\,\nabla\psi||_2^2 + ||w\,\psi||_2^2$$
where weight functions have asymptotic behaviors as discussed above, that is
\[ \begin{cases} 
      \Tilde{w}\to 1, \quad w\to r^{-1} & \text{at}\,\, E_i \\
      \Tilde{w}\to e^{\delta t}, \quad w\to e^{\delta t} & \text{at}\,\, C_j \\
   \end{cases}
\]
%$\Tilde{w}$ asymptotes to $1$ at $E_i$ while asymptotes to $e^{\delta t}$ at $C_j$, the weight function $w$ asymptotes to $r^{-1}$ at $E_i$ while asymptotes to $e^{\delta t}$ at $C_j$.
Note that any $\psi\in \HH_1$ vanishes at both Euclidean and cylindrical infinities.
Similarly, we define our target funciton space $\HH_2$ as the completion of $C_0^\infty(\Tilde{M};S)$ under the following norm
$$||\psi||_{\HH_2}^2 = ||\varpi\,\psi||_2^2$$
where the weight function 
\[ \begin{cases} 
      \varpi \to 1 & \text{at}\,\, E_i \\
      \varpi \to e^{\delta t} & \text{at}\,\, C_j \\
   \end{cases}
\]
as discussed above.
%$\varpi$ asymptotes to $1$ at $E_i$ while asymptotes to $e^{\delta t}$ at $C_j$

\begin{theorem}
   Suppose $\Tilde{M}$ is a $3$-dimensional Riemannian manifold with finitely many Euclidean ends $E_i$ ($1\leq i \leq m$) and cylindrical ends $\CC_j$ ($1\leq j \leq n$). We demand that each cylindrical end $\CC_j$ approach to $\Sigma_j \times \R$ at the rate of $e^{-\sqrt{\alpha_j}t}$ for some compact surface $\Sigma_j$ and some positive $\alpha_j$. Suppose on each Euclidean end $E_i$ the scalar curvature vanishes $R=0$ while on the rest of the manifold, including cylindrical ends, there is a vector field $Y$ such that
   \begin{equation}\label{eq:Jang_ineq}
       R \geq 2|Y|^2 - 2 div Y
   \end{equation}
   Furthermore, we demand $\langle Y, \partial_t\rangle \to 0$ where $t$ is the coordinate along the cylinder.
   Then the ADM mass of each Euclidean end is non-negative.
\end{theorem}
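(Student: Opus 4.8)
The plan is to run a Witten-type spinor argument on $\Tilde{M}$, using the Dirac operator twisted by the vector field $Y$ together with the weighted Fredholm package assembled above. Write $D_Y$ for the Dirac operator of the $Y$-modified spin connection $\bar\nabla$. Its integrated Lichnerowicz--Weitzenb\"ock identity states that whenever $D_Y\psi=0$,
\[
\int_{\Tilde{M}}|\bar\nabla\psi|^2+\tfrac14\int_{\Tilde{M}}\big(R-2|Y|^2+2\,\mathrm{div}\,Y\big)|\psi|^2=\sum_{\mathrm{ends}}\mathcal B_\infty(\psi),
\]
where $\mathcal B_\infty$ collects the limiting boundary terms: on a Euclidean end along which $\psi$ is asymptotic to a constant spinor $\psi_0$ it equals a positive multiple of $m_{ADM}|\psi_0|^2$, and on every end where $\psi$ decays it vanishes. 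The left-hand integrand is exactly $\tfrac14$ of the quantity in \eqref{eq:Jang_ineq} and is therefore non-negative. At a cylindrical end the twist differs from the untwisted $D$ by a zeroth-order term in $Y$; the hypothesis $\langle Y,\partial_t\rangle\to0$ forces its normal component to decay, so $D_Y$ keeps the asymptotic form $\hat D=d_{\partial_t}+L$ with $L$ formally self-adjoint on the cross-section, and the Fredholm theory assembled above applies to $D_Y$ without change.

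First I would produce a spinor with $D_Y\psi=0$ that is asymptotic to a fixed nonzero constant spinor $\psi_0$ on one chosen Euclidean end $E_i$ and decays on all other ends. Fixing a cutoff $\chi$ equal to $1$ near the infinity of $E_i$ and supported there, I seek $\psi=\chi\psi_0+\xi$ with $\xi\in\HH_1$; this reduces the problem to $D_Y\xi=-D_Y(\chi\psi_0)$, whose right-hand side is compactly supported (the flat end carries a covariantly constant $\psi_0$ to leading order) and so lies in $\HH_2$. Solvability is then equivalent to surjectivity of $D_Y\colon\HH_1\to\HH_2$, which, granted the Fredholm property, means showing the kernel and cokernel both vanish. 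The kernel is immediate: an element $\xi\in\HH_1$ decays at every end, so $\sum_{\mathrm{ends}}\mathcal B_\infty(\xi)=0$, the identity collapses to $\int|\bar\nabla\xi|^2+\tfrac14\int(R-2|Y|^2+2\,\mathrm{div}\,Y)|\xi|^2=0$, and non-negativity forces $\bar\nabla\xi=0$, whence $\xi\equiv0$.

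With such a $\psi$ in hand I would evaluate the identity, working on an exhaustion of $\Tilde{M}$ by regions truncated at radius $R$ on the Euclidean ends and at height $t=T$ on the cylinders and letting $R,T\to\infty$. The left-hand side is non-negative, and on the right only the sphere at infinity of $E_i$ survives, contributing a positive multiple of $m_{ADM}(E_i)|\psi_0|^2$: the other Euclidean ends contribute nothing because $\psi\to0$ there, and each cylindrical cross-section $\Sigma_j\times\{T\}$ contributes nothing because $\psi$ decays like $e^{-\delta T}$, while the sole $Y$-dependent piece of the cylindrical boundary term, proportional to $\int_{\Sigma_j}\langle Y,\partial_t\rangle|\psi|^2$, is killed both by this decay and by the hypothesis $\langle Y,\partial_t\rangle\to0$. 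Therefore $0\le c\,m_{ADM}(E_i)|\psi_0|^2$ with $c>0$, and since the end $E_i$ and the constant spinor $\psi_0$ were arbitrary, $m_{ADM}(E_i)\ge0$ for every Euclidean end.

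\textbf{The main obstacle} is the vanishing of the cokernel at the cylindrical ends. The cokernel is detected by solutions of the adjoint equation in the dual weight, where spinors may a priori grow like $e^{\delta t}$, so the Weitzenb\"ock identity cannot simply be applied to them: their cylindrical boundary term need not vanish. The remedy is to take $\delta$ positive but small, below $\min_{\lambda\neq0}|\lambda|$ and below the convergence rate $\sqrt{\alpha_j}$, and to show that for such $\delta$ no admissible growing indicial mode is present, so that every cokernel representative in fact decays and the identity does force it to zero. This is precisely where strict stability enters: it opens a spectral gap of $L$ around $0$ and produces the exponential rate $e^{-\sqrt{\alpha_j}t}$ at which Jang's graph approaches $\Sigma_j\times\R$, which in turn makes the error $\EE$ decay and renders the weighted scheme self-consistent. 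The marginally stable case, where the approach is only polynomial and $0$ sits at the edge of the spectrum of $L$, lies outside this argument and is deferred.
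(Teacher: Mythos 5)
Your overall strategy coincides with the paper's: a Witten-type spinor argument with weighted spaces ($W^{2,1}_{-1}$ on Euclidean ends, $e^{\delta t}$ on cylinders), Fredholmness via the tube/separation-of-variables theory, injectivity from coercivity, and a final exhaustion in which only the chosen Euclidean end contributes $c\,m_{ADM}(E_i)|\psi_0|^2$. Your ``twisted'' operator $D_Y$ is just a repackaging of the paper's device of adding half the divergence identity $\int_\Omega (\mathrm{div}\,Y)|\psi|^2 + Y\cdot\nabla(|\psi|^2)=\int_{\partial\Omega}\langle Y,\nu\rangle|\psi|^2$ to the integrated Lichnerowicz formula: with $\bar\nabla_i=\nabla_i+\tfrac12 Y_i$ the curvature term comes out as $\tfrac14\bigl(R+2\,\mathrm{div}\,Y-|Y|^2\bigr)$ rather than your $\tfrac14\bigl(R+2\,\mathrm{div}\,Y-2|Y|^2\bigr)$, but under \eqref{eq:Jang_ineq} both are non-negative, and this is exactly the paper's Cauchy--Schwarz step, so the discrepancy is harmless.

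The genuine gap is in your cokernel step. You claim that for $0<\delta<\min_{\lambda\neq 0}|\lambda|$ ``no admissible growing indicial mode is present, so that every cokernel representative in fact decays.'' That is false precisely in the case the weight shift was introduced to handle, namely $0\in\operatorname{spec} L$: an adjoint-kernel element lies in the dual weight, and $\|e^{-\delta t}\psi\|_2<\infty$ only excludes modes with indicial root below $-\delta$; the $\lambda=0$ modes survive, are merely bounded, and asymptote to a \emph{nonzero} element of $\ker L$ (a ``constant spinor'' on the cross-section). No choice of small $\delta$ removes them, since they sit exactly at the indicial root $0$, and your Weitzenb\"ock identity then has an a priori nonvanishing cylindrical boundary term, so the argument stalls exactly where you placed it. The paper closes this case by a separate argument: apply the coercive identity \eqref{eq:coercive_boundary} with $\Omega=\Tilde{M}$ to such a $\psi$, and show the cylindrical boundary integrand $\langle\psi,\nabla_\nu\psi\rangle+\tfrac12\langle Y,\nu\rangle|\psi|^2$ tends to zero because $d\psi_\lambda/dt\to 0$ for the surviving $\lambda\geq 0$ modes and $\langle Y,\partial_t\rangle\to 0$ (the hypothesis on $Y$ is used here a second time, not only for the decay of the error $\EE$); this yields \eqref{eq:exclude_lambda0}, hence $\nabla\psi=0$, and since $D$ is an isomorphism on the Euclidean ends $\psi$ decays there, forcing $\psi\equiv 0$. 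You must either add this step or prove $\ker L=0$ on each cross-section, which the theorem's hypotheses do not grant. A further small point: your $D_Y$ is not obviously formally self-adjoint (the paper's untwisted $D$ is), so identifying the cokernel with a kernel in the dual weight needs a word of justification in your setup.
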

\begin{proof}
     Assumptions stated in the theorem allow us to use the argument of \cite[Theorem 5.1]{wang2009isometric}, in particular their coercive inequality. We include a summary of their argument here for completeness. The vector field $Y$, denoted by $X$ in \cite{wang2009isometric}, is associated with Jang's solution and the inequality \eqref{eq:Jang_ineq} is guaranteed for Jang's graph over any spacelike $3$-manifold satisfying the dominant energy condition. The asymptotic condition $\langle Y, \partial_t\rangle \to 0$ follows from the definition of $Y$ and that $\R$ is a flat direction in $M\times \R$. The exponential convergence is guaranteed if the corresponding MOTS/MITS $\Surf$ is strictly stable with $\alpha$ the principal eigenvalue of the stability operator.

     Recall the standard Lichnerowicz formula reads
\begin{align*}
    D^*D \psi = \nabla^*\nabla \psi + \frac{1}{4} R \,\psi
\end{align*}
where $R$ is scalar curvature. Integration by parts on region $\Omega\subset \Tilde{M}$ yields
\begin{equation*}
\int_\Omega |\nabla \psi|^2 + \frac{1}{4}R|\psi|^2 - |D\psi|^2 =\int_{\partial \Omega} \langle \psi, (\nabla_\nu+ c(\nu) D ) \psi \rangle = \int_{\partial \Omega} \langle \psi, (D^{\partial \Omega} - \frac{1}{2} k) \psi \rangle
\end{equation*}
where $D^{\partial \Omega} = c(\nu) c(e_a) \nabla_a^{\partial \Omega} \psi$ is the boundary Dirac operator on $\partial\Omega$. Typical situations assumes $R\geq 0$ as here for $\Omega=\bigcup_i E_i$ and hence a coercive inequality for compactly supported sections easily follows. But for $\Omega= M\setminus \bigcup_i E_i$, we need to employ the trick of Wang and Yau. We add the following equality (half) to the previous equality
$$\int_\Omega (div\,Y)|\psi|^2 + Y\cdot \nabla(|\psi|^2) = \int_{\partial \Omega} \langle Y,\nu \rangle |\psi|^2 $$
to yield 
\begin{equation*}
\int_\Omega |\nabla \psi|^2 + \frac{1}{4}(R+2 div\,Y)|\psi|^2  + \frac{1}{2} Y\cdot \nabla(|\psi|^2) = \int_\Omega |D\psi|^2 + \int_{\partial \Omega} \langle \psi, \large(\nabla_\nu+ c(\nu) D \large) \psi \rangle + \frac{1}{2}\langle Y,\nu\rangle |\psi|^2
\end{equation*} 
Note the assumption $R+2div\,Y \geq 2 |Y|^2$ and apply Cauchy-Schwarz to the left hand side,
\begin{align*}
    \frac{1}{2}|\nabla \psi|^2 + \frac{1}{4}(R+2div\,Y)|\psi|^2 + \frac{1}{2}Y\cdot \nabla (|\psi|^2) &\geq \frac{1}{2}\large( |\nabla \psi|^2 + |Y|^2 |\psi|^2  + 2 \langle \nabla_Y \psi,\psi\rangle \large)\\
    &\geq \frac{1}{2}\large( |\nabla \psi|^2 + |Y|^2 |\psi|^2  - 2 |Y| |\nabla \psi| |\psi|\large)\geq 0
\end{align*}
one reaches at
\begin{equation}\label{eq:coercive_boundary}
\frac{1}{2}\int_\Omega |\nabla \psi|^2 \leq \int_\Omega |D\psi|^2 + \int_{\partial \Omega} \langle \psi, \large(\nabla_\nu+ c(\nu) D \large)\psi \rangle + \frac{1}{2}\langle Y,\nu\rangle |\psi|^2 
%\\ &=\int_\Omega |D\psi|^2 + \int_{\partial \Omega} \langle \psi, -D^{\partial \Omega}\psi \rangle - \frac{k}{2}|\psi|^2+ \langle X,\nu\rangle |\psi|^2
\end{equation} 

Thus, for $\Omega= M\setminus \bigcup_i E_i$, we still has covercive inequality for compactly supported sections. In summary, under our assumptions, we have
\begin{equation}\label{eq:coercive}
  \int_{\Tilde{M}} |\nabla \psi|^2 \leq C\int_{\Tilde{M}} |D\psi|^2, \quad \forall \psi\in C_0^\infty(M;S)  
\end{equation}

To prove positive ADM mass, one typically invokes Witten's trick \cite{witten1981PMT} to set $\psi=\psi_\infty + \chi$. Let smooth $\psi_\infty$, supported outside some compact set $K$, approach some nonzero constant spinor at one Euclidean infinity $E_i$ and zero at other ends. Let $\chi$ solve $D\chi = - D\psi_\infty$, provided solvability of Dirac equation in suitable weighed function space is proved. Then one can guarantee that $D\psi=0$ and $\psi$ asymptotes to constant spinor at the Euclidean infinity under consideration. The boundary term at the Euclidean infinity $E_i$ is proportional to its ADM mass, which is hence positive from above inequalities. Here we need to establish solvability of the Dirac operator in the presence of cylindrical ends and discuss possible cylindrical boundary term. Since the solvability of Dirac operator at Euclidean ends are well established, we focus on cylindrical ends here.

We show that $D$ is Fredholm operator on an cylindrical end with two steps: (i) show that $\hat{D}=d_{\partial_t} + L$ is Fredholm, (ii) show that $\mathcal{D}=\hat{D}+\text{error}$ is still Fredholm provided that the error term vanishes toward the cylindrical infinity. That $\hat{D}$ is Fredholm follows from the same argument as in \cite[Proposition 3.6]{donaldson2002floer} for deformation operator, which we summarize here. Using exponential weight functions to shift the spectrum of $L$ such that $0$ does not lie in the spectrum, we can assume $L$ is invertible. Then separation of variable easily establish that $\hat{D}$ is an isomorphism on a perfect cylinder $\Sigma\times \R$; moreover
$$||\psi||_{2,\Surf\times \R} \leq C \,||\hat{D} \psi ||_{2,\Surf\times \R}$$
From a perfect cylinder to a cylindrical end $\CC$ of a Riemannian manifold, one use bump function as in \cite[Theorem 1.10]{bartnik1986mass} to establish that 
$$||\psi||_{2,\CC} \leq C \,|| \hat{D} \psi ||_{2,\CC} + ||\psi||_{2,\CC_c}$$
where $\CC_c$ is a compact subset of $\CC$. The above inequality then allows one to invoke Rellich's compactness theorem to conclude that $\hat{D}$ has finite dimensional kernel. 
Separation of variable also allows one to construct explicit inverse operator of $\hat{D}$ on $\Surf\times \R$. On a cylindrical end $\CC$, one uses the partition of unity to piece together inverses constructed on the far end and on compact subsets, yielding a parametrix $P$, i.e. $\hat{D} P = 1 $ modulo a compact operator. Functional analysis results then guarantee that $\hat{D}$ is Fredholm on $\CC$. Now include the error term following \cite[section 3.2.1]{donaldson2002floer}. On a perfect cylinder $\DD=\hat{D}+ \epsilon$ is still invertible provided that $\epsilon$ small enough. Also, $\DD\psi =f$ is equivalent to $\hat{D}\psi= f -\epsilon \psi$ yielding that on a half cylinder
$$||\psi||_{2,\Surf\times \R_+} \leq  C  ||f||_{2,\Surf \times \R_+}+ C \, \epsilon || \psi||_{2,\Surf \times \R_+}+||\psi||_{2,\Surf\times (0,T)} $$
which still leads to an improved elliptic estimate, provided $C \epsilon <1$. Since the error term in $\DD = \hat{D}+ \text{error}$ vanishes toward the infinity, one can take $\epsilon$ arbitrarily small by enlarging the compact subset. Then the argument for $\hat{D}$ being Fredholm applies to $\DD$ without changes. Thus $D$ is still Fredholm when $\Tilde{M}$ has cylindrical ends. 

%First, it is not difficult to see that $D=c(e_i) \nabla_i:\HH_1\to \HH_2$ is a bounded linear operator. 
That $D$ is injective follows directly from the coercive inequality: let $\psi\in \text{Ker}\,D\subset \HH_1$, then $\psi\to 0$ at infinity and \eqref{eq:coercive} implies that $\nabla\psi=0$ and hence $\psi=0$. We show that $D$ is surjective by showing that Ker\,$D^*$ vanishes. Since $D$ is formally self adjoint, we only need to show Ker$(\hat{D}+\EE) \subset L^{2,1}_{-\delta}$ vanishes. Projecting  $\hat{D}\psi = - \EE\psi$ onto $\phi_\lambda$
$$(\frac{d}{dt}+\lambda)\psi_\lambda = f\,\psi_\lambda + \sum_{\lambda'\neq \lambda} g_{\lambda'}\, \psi_{\lambda'}$$
where $f(t), g(t)$ decays as $\EE$.
A formal solution is 
$$\psi_\lambda = e^{-\lambda t + \int^t f} \, \int^t e^{\lambda \tau-\int^\tau f}\, \sum_{\lambda'\neq \lambda} g_{\lambda'}(\tau) \psi_{\lambda'}(\tau)$$
where the lower limit of the integral is $0$ for $\lambda >0$ and $\infty$ for $\lambda <0$.
For stricly stable MOTS/MITS, $\EE\sim e^{-\sqrt{\alpha}t}$, $e^{\int^t f}$ quickly converges to $1$. Note that our choice $0<\delta<\min_{\lambda\neq0} |\lambda|$ guarantees that $-\delta$ is greater than the first negative eigenvalue.
%Now we show that Ker$\,\hat{D}^*$=(Im$\hat{D}$)$^\bot$ vanishes.
%To examine the cokernel of $\hat{D}$, we look at the kernel of $\hat{D}^*: L^2_{-\delta}\to L^{2,1}_{-\delta}$. 
%Use separation of variable, one can show that Ker$\hat{D}^*$ can only contain nontrivial $\psi$ that asymptotes to constant spinor at cylindrical infinity.
%Take $\psi= \sum_\lambda \psi_\lambda(t) \phi_\lambda \in \text{Ker}\hat{D}^* \subset L^{2}_{-\delta}(\CC_i)$, then $\psi_\lambda(t) = \psi_\lambda(0) e^{-\lambda t}$. 
That $||e^{-\delta t}\psi||_2<\infty$ implies that $\lambda > -\delta$, and hence $\lambda \geq 0$. Thus $\psi\in$ Ker${D}^*$ can only asymptot to constant spinor at cylindrical infinities. 
%\textcolor{red}{gap: Ker$D$ not necessarily same as Ker$\hat{D}$}
But we show that this is excluded in our particular situation. Recall that the Dirac operator $D$ is isomorphism on Euclidean ends with our choice of domain function space $\HH_1$, $\psi\in \text{ker}\, \hat{D}^*$ should asymptote to $0$ at Euclidean ends. Take $\Omega$ to be the full manifold $\Tilde{M}$ in \eqref{eq:coercive_boundary}, one has
\begin{equation}\label{eq:exclude_lambda0}
\frac{1}{2}\int_{\Tilde{M}} |\nabla \psi|^2 \leq \int_{\partial {\Tilde{M}}} \langle \psi, \nabla_\nu \psi\rangle
\end{equation}
using that $Y \equiv 0$ at Euclidean ends and $\langle Y,\partial t\rangle \to 0$ towards cylindrical infinities.
At cylindrical infinities %$\psi$ asymptotes to the zero eigen-spinor modulo a part that falls off %exponentially 
$$\nabla_\nu \psi  \to \sum_{\lambda\geq 0} \frac{d \psi_\lambda(t)}{dt}\,\phi_\lambda \to 0$$
%= \sum_\lambda \nu(\psi_\lambda(t)) \phi_\lambda -\psi_\lambda(t)\,\frac{1}{4}\omega^i_j(\nu)c(e_i e_j)\phi_\lambda 
as $\frac{d \psi_\lambda(t)}{dt}\to 0$ for $\lambda \geq0$ and $\langle \nabla_{\nu}e_j,e_i\rangle \to 0$ as $t\to \infty$.
So the right hand side of \eqref{eq:exclude_lambda0} vanishes, which then implies $\nabla \psi=0$ and hence $\psi=0$.
%\textcolor{blue}{The cure maybe counting dimension of Kernel and cokernel and hence Fredholm index of $\hat{D}$, better be zero and then invoke Index theorem such as the index is constant for continuous variation of operator to conclude Ind$D=0$ and hence D is surjective.}
Therefore, ${D}$ is isomorphism. 
%One can then apply Witten's argument: construct $\psi_\infty$ supported in the exterior region which asymptotes to constant spinor at one Euclidean end while asymptotes to 0 at all other ends and then solve $D \chi = -D \psi_\infty$ to obtain the harmonic spinor $\psi=\chi+\psi_\infty$ with desired asymptotic behavior. Take such $\psi$ in \eqref{eq:coercive_boundary} yields that ADM mass of this or any Euclidean end is non-negative, recalling that the boundary term at Euclidean infinities for constant spinors is proportional to ADM mass.
\end{proof}

\begin{rmk}
    Note that the manifold $\Tilde{M}$ in \cite{wang2009isometric} was obtained from gluing Jang's graph $\Tilde{\Omega}$ over a compact region $\Omega$ of the Cauchy slice into  $\R^3$ along $\Tilde{\Sigma}=\partial \Tilde{\Omega}$. They then apply Bartnik's quasi-spherical construction to $\Tilde{M}\setminus\Tilde{\Omega}$, i.e. exterior region of $\Tilde{\Sigma}$; that is, they deform the metric on $\Tilde{M}\setminus\Tilde{\Omega}$
    \begin{equation}\label{eq:bartnik_qs}
        ds^2 = dr^2 + g^{\Sigma_r} \longrightarrow ds^2 = u^2 dr^2 + g^{\Sigma_r} 
    \end{equation}
    where $g^{\Sigma_r}$ is the induced metric on level surfaces ${\Sigma_r}$ of distance $r$ to $\Tilde{\Sigma}$ and $u$ solves a parabolic PDE to guarantee vanishing of the scalar curvature on $\Tilde{M}\setminus\Tilde{\Omega}$.
    
    Note that across $\Tilde{\Sigma}$ there exists a jump in the vector field $Y$ together with a jump in the scalar curvature, from nonzero value inside $\Tilde{\Sigma}$ to zero outside $\Tilde{\Sigma}$. %which is enforced to vanish on $\Tilde{M}\setminus\Tilde{\Omega}$ by $R\geq 2|Y|^2-2div(Y)$. 
    This could lead to different values of boundary integral $\int_{\Tilde{\Sigma}} \Tilde{k}-\langle Y, \Tilde{\nu}\rangle$ from interior and exterior regions, bringing additional terms in the coercive inequality. To avoid this, they designed a jump in the mean curvature of $\Tilde{\Sigma}$ with respect to interior and exterior metric to compensate the jump in $Y$.
    In short, the metric on $\Tilde{M}$ in their case is only Lipschitz continuous cross $\Tilde{\Sigma}$, not smooth. However, imposing transmission conditions of \cite{bar2016guideEBVP}  eliminates concerns about the metric non-smoothness. That is, one cuts the manifold $\Tilde{M}$ along ${\Tilde{\Sigma}}$, yielding two identical boundaries, denoted by ${\Tilde{\Sigma}}_1$ and ${\Tilde{\Sigma}}_2$. Then one regards the manifold as $(\Tilde{M}\setminus {\Tilde{\Sigma}}) \bigcup \Tilde{\Sigma}_1 \bigcup \Tilde{\Sigma}_2$. Taking the following boundary condition guarantees an elliptic boundary value problem \cite[Example 4.23]{bar2016guideEBVP}
    $$B=\{(\phi,\phi)\in H^{1/2}(\Tilde{\Sigma}_1,S')\oplus H^{1/2}(\Tilde{\Sigma}_2,S'): \phi \in H^{1/2}(\Tilde{\Sigma},S')\}$$
    where $S'$ is the restriction of spinor bundle $S$ over $\Tilde{M}$ to $\Tilde{\Sigma}$. Note that this transmission condition also guarantees that the cutting yields no additional boundary terms, besides those designed by Wang \& Yau, in the derivation of coercive inequality.
This justifies our approach of ignoring the metric non-smoothness along $\Tilde{\Sigma}$ and also the approach of Wang \& Yau. 

    The positivity of ADM mass for Euclidean ends allows one to establish a comparison inequality (see the proof for the rigidity Theorem \ref{thm:rigidity})
    \begin{equation}\label{eq:comparison_ineq}
        \int_{\hat{\Sigma}} \hat{k} \geq \int_{\Tilde{\Sigma}} \Tilde{k}-\langle Y, \Tilde{v}\rangle
    \end{equation}
which then implies the positivity of Wang-Yau QLE as
\begin{align*}
\int_{\hat{\Sigma}} \hat{k} &= \int_\Sigma -\langle \Xi_0, T_0\rangle \\
   \int_{\Tilde{\Sigma}} \Tilde{k}-\langle Y, \Tilde{v}\rangle &\geq \int_\Sigma -\langle \Xi, T \rangle
\end{align*}
and hence also Wang-Yau QLM. %, provided the thrid admissibility condition is satisfied (see section \ref{section:admissibility}).
%ends the positivity proof of Wang-Yau Quasi-local mass with the possible presence of apparent horizons inside the region bounded by a $2$-surface $\Sigma$.
\end{rmk}

%\begin{rmk}
%    Previous works all assumed the condition that apparent horizon not appearing. Our results indicate that these assumptions are not necessary.
%\end{rmk}

\begin{rmk}
We have shown that the Wang-Yau QLM defined on a $2$-surface $\Sigma$ satisfying the two admissibility conditions is positive, allowing possible apparent horizons inside $\Sigma$ (this of course depends on the slice chosen). One can take the limit of $\Sigma$ approaching a strictly stable apparent horizon to obtain non-negativity. This justifies the numerical study of Wang-Yau QLM defined on an apparent horizon in \cite{pook2023properties}. 
\end{rmk}

We note here that a recent work \cite{alaee2020geometric} extended Wang-Yau QLM positivity proof to allow apparent horizons and also to higher dimensions ($3\leq n\leq 7$) following the conformal deformation approach of Schoen \& Yau \cite{SchoenYau1981}. The critical step in their approach is the smoothing in a tubular neighbourhood of the gluing along $\Tilde{\Sigma}$.
%Now we have to deal with the issue that the manifold in Theorem 5.1 of Wang and Yau (2008), obtained by extending Jang's graph with Bartnik's spherical construction only has lipschitz continuous metric at the gluing boundary.

The rigidity theorem can be easily established.
\begin{theorem}\label{thm:rigidity}
    If the Wang-Yau QLE defined on a $2$-surface $\Sigma$, with space-like mean curvature vector and satisfying the two admissibility conditions, vanishes then any spacelike $3$-manifold bounded by the $2$-surface $\Sigma$ is a trivial initial data, i.e. its metric and second fundamental form can be induced from embedding into the Minkowski spacetime.
    %Puskar is very likely wrong. Alaee seems to be confused between glued manifold ADM mass=0 and Wang-Yau QLM=0
\end{theorem}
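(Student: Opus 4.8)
The plan is to run the equality case of the positivity argument established above. Since the Wang--Yau QLE is the difference $\int_\Sigma -\langle \Xi_0, T_0\rangle - \int_\Sigma -\langle \Xi, T\rangle$ and the positivity proof factors this through the Jang-graph intermediate $\int_{\Tilde{\Sigma}} \Tilde{k} - \langle Y, \Tilde{\nu}\rangle$, the hypothesis that the QLE vanishes forces every inequality in the chain
$$\int_\Sigma -\langle \Xi_0, T_0\rangle \;\geq\; \int_{\Tilde{\Sigma}} \Tilde{k} - \langle Y, \Tilde{\nu}\rangle \;\geq\; \int_\Sigma -\langle \Xi, T\rangle$$
to be an equality. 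In particular the comparison inequality \eqref{eq:comparison_ineq} saturates, which by the Witten-type boundary computation forces the ADM mass of the associated Euclidean-ended manifold $\Tilde{M}$ (the Jang graph glued to its quasi-spherical extension) to vanish. Since the QLE depends only on $\Sigma$, this holds for the Jang graph built over any admissible spacelike $\Omega$ with $\partial\Omega=\Sigma$.

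First I would revisit the coercive inequality \eqref{eq:coercive_boundary} with $\Omega = \Tilde{M}$ and the Witten spinor $\psi$ asymptotic to a constant at a Euclidean end. With vanishing ADM mass the boundary term drops, the left-hand side of \eqref{eq:coercive} vanishes, and every intermediate inequality must be an equality. Tracing back, the Cauchy--Schwarz step $|\nabla\psi|^2 + |Y|^2|\psi|^2 - 2|Y||\nabla\psi||\psi| \geq 0$ becomes an equality and the dominant-energy inequality \eqref{eq:Jang_ineq} saturates, $R + 2\,\mathrm{div}\,Y = 2|Y|^2$. The equality case yields the $Y$-twisted parallel condition $\nabla_{e_i}\psi = -\,Y_i\,\psi$, which on the Euclidean region where $Y\equiv 0$ reduces to $\nabla\psi = 0$. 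Repeating Witten's construction over a full basis of asymptotic constant values produces the maximal number of such spinors; their integrability condition $[\nabla_i,\nabla_j]\psi = -(\mathrm{d}Y)_{ij}\psi$ together with genuine parallelism on the Euclidean part then forces the curvature of $\Tilde{M}$ to vanish, i.e. $\Tilde{M}$ is flat.

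Next I would argue that flatness of $\Tilde{M}$ feeds back through Jang's equation to trivialize the data. The blow-up solutions that create the cylindrical ends occur precisely at strictly stable MOTS/MITS, which cannot sit inside a flat manifold; flatness therefore precludes any cylindrical end, so $\Tilde{M}$ is an honest graph over $\Omega$ and Jang's equation holds globally. The saturation $R + 2\,\mathrm{div}\,Y = 2|Y|^2$ together with the $Y$-twisted parallel spinors identifies the second fundamental form data of $\Omega$ with the restriction of a flat structure, and the reconstruction used in the rigidity part of \cite{wang2009isometric} then exhibits an isometric embedding of $(\Omega, g, K)$ into Minkowski spacetime realizing $K$ as its second fundamental form. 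Since $\Omega$ was arbitrary, every spacelike $3$-manifold bounded by $\Sigma$ is trivial initial data.

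The main obstacle I expect is twofold. First, extracting the precise $Y$-twisted parallel-spinor equation from the equality case and verifying that it globally integrates to a flat structure, rather than merely forcing a curvature constraint patchwise, requires care with the imaginary-Killing-spinor analysis on the non-compact $\Tilde{M}$ and with ruling out zeros of $\psi$. Second, one must control the regularity across the gluing surface $\Tilde{\Sigma}$: the metric is only Lipschitz there and $Y$ jumps, so in the equality case I would invoke the transmission boundary condition of \cite{bar2016guideEBVP} recorded in the preceding remark to ensure $\psi$ and its normal derivative match across $\Tilde{\Sigma}$ with no spurious boundary contribution, so that the flat structure genuinely extends through $\Tilde{\Sigma}$ and the reconstructed Minkowski embedding is global.
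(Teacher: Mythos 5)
Your proposal is correct in its skeleton and ends in the same place as the paper (vanishing ADM mass of the glued manifold $\Tilde{M}$, flatness, then the Schoen--Yau reconstruction of a Minkowski embedding), but the mechanism you use to get flatness is genuinely different from the paper's. The paper never runs the equality case of the Witten spinor argument. Instead it exploits the monotonicity formula \eqref{eq:u_eq} for the conformal factor $u$ in Bartnik's quasi-spherical extension \eqref{eq:bartnik_qs}: vanishing QLE saturates \eqref{eq:comparison_ineq}, forcing $\int_{\Sigma_r} k_0(1-\frac{1}{u})$ to be constant in $r$; since $R^{\Sigma_r}>0$ this gives $u\equiv 1$, so the exterior region is literally $\R^3\setminus\Tilde{\Omega}$ and $m_{ADM}(\Tilde{M})=0$. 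Flatness of all of $\Tilde{M}$ is then imported from Theorem 1.2 of \cite{bray2022harmonic}, applied with a flat coordinate of the exterior region serving as the harmonic function; this gives $Y\equiv 0$, and \cite{SchoenYau1981} yields triviality of $(\Omega,g,K)$, for \emph{every} spacelike $\Omega$ bounded by $\Sigma$ because the spacelike mean curvature vector guarantees solvability of Jang's equation over any such $\Omega$ (the second-admissibility remark of section \ref{section:admissibility}). What the paper's route buys is that it completely sidesteps the equality-case spinor analysis on a manifold that is only Lipschitz across $\Tilde{\Sigma}$ and carries cylindrical ends; what your route buys is self-containment within the Witten machinery already assembled for the positivity theorem, with no appeal to the external harmonic-function rigidity theorem.

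Three points in your version need tightening. First, the twisted-parallel equation $\nabla_{e_i}\psi=-Y_i\psi$ does follow from the equality case of the Cauchy--Schwarz chain below \eqref{eq:Jang_ineq}, and it conveniently implies $\nabla_i|\psi|^2=-2Y_i|\psi|^2$, which rules out zeros of $\psi$; with a rank-two basis of such spinors the spin curvature acts as the scalar $-(dY)_{ij}$ on the whole fiber, and tracelessness of the Clifford bivectors then forces $dY=0$ and $R_{ijkl}=0$ pointwise --- but you must also check the two Witten spinors remain independent, which the same first-order equation for their Gram matrix supplies; your proposal only gestures at this. Second, your exclusion of cylindrical ends is misstated: the MOTS/MITS sit in $\Omega$, not in $\Tilde{M}$, so ``a strictly stable MOTS cannot sit inside a flat manifold'' is not the right statement. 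The correct argument is either that a flat end asymptotic to $\Surf_j\times\R$ forces flat (hence toroidal) cross-sections, contradicting the spherical topology of strictly stable MOTS under the dominant energy condition, or simply that a complete flat $3$-manifold with a Euclidean end admits no cylindrical ends. Third, regarding the interface: the Wang--Yau mean-curvature jump across $\Tilde{\Sigma}$ is designed to exactly compensate the jump in $Y$, so in the equality case you obtain the twisted-parallel condition on each smooth piece separately; that suffices, since the conclusion only requires flatness and $Y=0$ on the Jang-graph piece before handing off to \cite{SchoenYau1981}, so you need not propagate a single flat structure through $\Tilde{\Sigma}$ as your last paragraph suggests.
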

\begin{proof}
    The positivity proof uses a monotonicity formula satisfied by the conformal factor $u$ in Bartnik's quasi-spherical construction \eqref{eq:bartnik_qs} 
\begin{equation}\label{eq:u_eq}
    \frac{d}{dr}\int_{\Sigma_r} k_0 (1-\frac{1}{u}) = -\frac{1}{2}\int_{\Sigma_r} R^{\Sigma_r} \, \frac{(1-u)^2}{u} \leq 0
\end{equation}
where $\Sigma_r$ is the level surface of distance $r$ (in flat metric) from $\Tilde{\Sigma}$, $k_0$ and $R^{\Sigma_r}$ are, respectively, the mean curvature and scalar curvature of $\Sigma_r$ in the flat metric, both positive by assumption. The comparison inequality \eqref{eq:comparison_ineq} follows from
$$\int_{\Sigma_{r=0}} k_0 (1-\frac{1}{u})=\int_{\hat{\Sigma}} \hat{k} - \int_{\Tilde{\Sigma}} \Tilde{k}-\langle Y, \Tilde{v}\rangle \geq \lim_{r\to \infty} \int_{\Sigma_r} k_0 (1-\frac{1}{u})=m_{ADM}(\Tilde{M}) \geq 0$$
%and that $\frac{d}{dt}\int_{\Sigma_{r=0}} k_0 (1-\frac{1}{u})=\int_{\hat{\Sigma}} \hat{k} - \int_{\Tilde{\Sigma}} \Tilde{k}-\langle Y, \Tilde{v}\rangle$ by
where we used that $u(r=0)={k_0}/{( \Tilde{k}-\langle Y,\Tilde{\nu} \rangle)}$ and $k_0=\hat{k}$ by construction.

If the Wang-Yau QLE vanishes, the comparison inequality \eqref{eq:comparison_ineq} saturates and hence the monotonicity formula \eqref{eq:u_eq} vanishes. It follows that $u=1$ and the exterior region $\Tilde{M}\setminus\Tilde{\Omega}$ is just $\R^3\setminus \Tilde{\Omega}$. Also, the ADM mass of the glued manifold $\Tilde{M}$ vanishes. One can then apply Theorem 1.2 of \cite{bray2022harmonic} to the flat exterior region $\R^3\setminus \Tilde{\Omega}$ with a flat coordinate acting as the harmonic function. This conclude that $\Tilde{M}$ and hence $\Tilde{\Omega}$ are flat. Accordingly, the vector field $Y$ vanishes. Then the argument of Schoen \& Yau \cite{SchoenYau1981} yields that $\Omega$ is a trivial initial data, i.e. its metric and second fundamental form are induced from embedding into Minkowski spacetime. 

That $\Sigma$ has space-like mean curvature vector guarantees any spacelike $3$-manifold $\Omega$ bounded by $\Sigma$ satisfies the solvability condition with boundary condition $\tau$ (see remarks on the second admissibility in section \ref{section:admissibility}). This implies that we can apply previous arguments to any spacelike $3$-manifold $\Omega$ bounded by $\Sigma$.
\end{proof}

\end{section}

\appendix
\begin{section}{Quasi-local charge associated with killing vector field in Minkowski spacetime}\label{section:appA}
Here we show the derivation from equation (\ref{eq:QL_generalized_mean_curvature}) to equation (\ref{eq:QL_CWY2015}). Decompose the killing vector $\zeta$ as
    $$\mathbf{\zeta} = \Tilde{N} u_0 + \Tilde{\mathbf{N}} + l v_0 $$
    Note that for spacelike killing vector $\zeta_0$, $l=\langle \zeta_0,v_0 \rangle \neq 0$, unlike the case of $T_0$. Similarly, pull-back of $\zeta_0$ to physical spacetime is
    $\mathbf{\zeta} = \Tilde{N} u + \Tilde{\mathbf{N}} + l v$. Then
    \begin{align*}
        -\langle \Xi, \zeta \rangle &= - \langle k u+K(v,\cdot)-(trK)\, v, \Tilde{N} u + \Tilde{\mathbf{N}}+ l v\rangle\\
        &= k \Tilde{N}  - K(v,\Tilde{\mathbf{N}}) -l K(v,v) + l(trK) \\
        & = -\langle H, v \rangle \Tilde{N} + \alpha_v(\Tilde{\mathbf{N}}) -lK(v,v) + l(trK) 
    \end{align*}
    where in the last step we simply switch to the notation of Wang \& Yau, $$k=-\langle H,v\rangle, \quad \alpha_v(\cdot)=-K(v,\cdot)$$
    Now
    \begin{align*}
        8\pi \, QLC &= \int_\Sigma -\langle \Xi_0,\zeta_0 \rangle -\int_\Sigma -\langle \Xi,\zeta\rangle\\
        &= \int \Tilde{N}\big(\langle H,v\rangle-\langle H_0,v_0\rangle\big) + (\alpha_{v_0} - \alpha_v) (\Tilde{\mathbf{N}}) + (trK_0 -trK)l - \big[K_0(v_0,v_0)-K(v,v)\big]l\\
        &=\int \Tilde{N} \rho \sqrt{1+|\nabla\tau|^2} + (\alpha_{v_0} - \alpha_v) (\Tilde{\mathbf{N}}) + (trK_0 -trK)l - \big[K_0(v_0,v_0)-K(v,v)\big]l\\
        &= \int \Tilde{N} \rho \sqrt{1+|\nabla\tau|^2} + \rho (\nabla\tau \cdot \Tilde{\mathbf{N}}) - (j\cdot \Tilde{\mathbf{N}})+ (trK_0 -trK)l - \big[K_0(v_0,v_0)-K(v,v)\big]l\\
        &= -\int \rho \langle \zeta, T_0\rangle + \langle \zeta^T , j \rangle \quad + \quad \int \big(\langle H,u\rangle-\langle H_0, u_0\rangle \big)\langle \zeta_0,v_0\rangle
    \end{align*}
    where the first integral of the last expression recovers equation (\ref{eq:QL_CWY2015}) while the second integral vanishes identically because of the canonical gauge (see below). In line 3, we used the definition of $\rho$ 
    $$\rho \sqrt{1+|\nabla \tau|^2}= \langle H, v\rangle - \langle H_0,v_0\rangle $$
    In line 4, we invoked the definition of $j$ 
    $$j=\rho \nabla \tau - \alpha_{v_0}+\alpha_v$$
    which appears in the optimal embedding equation (Euler-Lagrangian equation for varying QLE with respect to $\tau$)
    $$div_\Sigma (j) =0$$
    In line 5, we used that
    %$\langle \zeta_0,T_0= - N \Tilde{N} + \mathbf{N} \Tilde{\mathbf{N}} = - \Tilde{N}\sqrt{1+|\nabla \tau|^2}-\Tilde{\mathbf{N}}\cdot \nabla\tau$ and 
    \begin{align*}
        trK_0 - K_0(v_0,v_0) &= -\langle H_0, u_0\rangle, \quad trK - K(v,v) = -\langle H, u\rangle\\
    \langle T_0, \zeta \rangle &= - \Tilde{N}\sqrt{1+|\nabla\tau|^2}- \langle \nabla\tau,\Tilde{\mathbf{N}}\rangle
    \end{align*}

    and that
    $j\cdot \Tilde{\mathbf{N}}=j \cdot \zeta$
    since $j\in T\Sigma$.
\end{section}

\printbibliography

\end{document}